\newtheorem{Corollary}{Corollary
}
\newtheorem{Example}{Example}
\newtheorem{theorem}{Theorem}
\newtheorem{definition}{Definition}
\newtheorem{lemma}{Lemma}
\newenvironment{proof}{\textit{Proof\,:}} { $\blacksquare$}
\begin{document}
\title{Construction of protograph-based LDPC codes \\
with chordless short cycles}
\author{Farzane Amirzade, Mohammad-Reza~Sadeghi and Daniel Panario\\

\thanks{F. Amirzade and  M.-R. Sadeghi are with the Department of Mathematics and Computer Science, Amirkabir University of Technology.  D. Panario is with the School of Mathematics and Statistics, Carleton University. 

(e-mail:  famirzade@gmail.com,  msadeghi@aut.ac.ir, daniel@math.carleton.ca).}
}


\maketitle
\begin{abstract}
Controlling small size trapping sets and short cycles can result in 
LDPC codes with large minimum distance $d_{\min}$. We 
prove that short cycles with a chord are the root of several trapping 
sets and eliminating these cycles increases $d_{\min}$. We show that 
the lower bounds on $d_{\min}$ of an LDPC code with chordless short 
cycles, girths 6 (and 8), and column weights $\gamma$ (and 3), 
respectively, are $2\gamma$ (and 10), which is a significant 
improvement compared to the existing bounds $\gamma+1$ (and 6). 
Necessary and sufficient conditions for exponent matrices 
of protograph-based LDPC codes with chordless short cycles are proposed for any type of 
protographs, single-edge and multiple-edge, regular and irregular. The 
application of our method to girth-6 QC-LDPC codes shows that the removal of those cycles improves previous results 
in the literature. 

\end{abstract}
\begin{IEEEkeywords}
LDPC codes, girth, Tanner graph, elementary trapping set, chordless cycles, compact code.
\end{IEEEkeywords}

%
\IEEEpeerreviewmaketitle
\section{Introduction}
\IEEEPARstart{A} protograph   (Ptg) is a small size bipartite graph whose adjacency matrix is taken as a base matrix of  protograph-based LDPC codes. These codes  which are associated with a base matrix and an exponent matrix 
have attracted attention. It is known that the length of the shortest 
cycles of the Tanner graph (TG), $\emph{girth}$, influences code performance.  Graphical structures including detrimental 
$\emph{trapping sets}$ (TS) are known to be key factors of error floor 
behavior of LDPC codes. An $(a,b)$ TS is a set of $a$ variable-nodes, 
v-nodes, in the TG which induce a subgraph of the TG with $b$ 
check-nodes, c-nodes, of odd degrees and an arbitrary number of even 
degree c-nodes. An $(a,b)$ TS is $\emph {elementary}$ (ETS) if all c-nodes 
are of degree 1 or 2.

Removing TSs up to a certain size which are subgraphs of $(a,0)$ TSs, 
yields a code with improved minimum distance $d_{\min}$. 
In \cite{Battag2018}, it is shown that controlling specific cycles  
contributes to the removal of harmful trapping sets. In \cite{AMC} is 
given,   using an edge-coloring technique, a sufficient condition for an exponent matrix to give a 
$(3,n)$-regular algebraic-based QC-LDPC code with girth $g=6$,   column weight $\gamma=3$, row weight $n$ and free 
of $(a,b)$ ETSs, $4\leq a\leq 5$ and $b\leq2$.   Most methods in the literature to remove harmful TSs are applied to fully-connected regular Ptgs; there are not many results regarding irregular and multiple-edge Ptgs. Moreover, they focus on removing a specific TS whereas,   numerical results in \cite{Karimi2019} show 
that for $g=6$ and $\gamma=3$ the removal of ETSs of large size  yields a minimum lifting degree which is larger than the minimum 
lifting degree of a girth-8 code with the same degree distribution.

  In this paper, we show that short cycles with a chord, cycle-wc, are the root of several trapping 
sets and eliminating these cycles increases $d_{\min}$. We provide a new method to control short cycles-wc which is applicable to any exponent matrix, 
does not cause a lifting degree larger than the minimum lifting degree of an LDPC code with the same degree distribution but a higher 
girth, and improves the lower bound on $d_{\min}$. We prove that our approach can be applied to single-edge (SE) and multiple-edge (ME) Ptgs, and it is not limited to regular codes. We provide numerical and analytical results to show the applicability of our method to irregular codes such as Raptor-Like codes \cite{Divsalar} and LDPC codes from Sidon sequences \cite{Sidon}. Our main contributions are as follows:
(1) For $g=6$ and   minimum column weight $\gamma$ we show that avoiding 8-cycles-wc results in an LDPC code in which the lower bounds on the size of 
smallest $(a,b)$ ETSs, $b<a$, are equal to the minimum sizes of 
ETSs in a girth-8 LDPC code, and the lower bound on $d_{\min}$ is 
increased from $\gamma+1$ to $2\gamma$. (2) For $g=8, \gamma=3$ we prove that avoiding 12-cycles-wc yields an LDPC code 
with $d_{\min}\geq10$ and the lower bound on the size of the 
smallest $(a,b)$ ETSs, $b<a$, is equal to the lower bound on the 
size of ETSs of a girth-10 LDPC code. (3) We obtain a necessary and sufficient condition to construct 
girth-6 LDPC codes free of 8-cycles-wc.

We give the structure of the paper. Section \ref{II} presents some 
basic definitions. Section \ref{III} investigates cycle-wc of short 
lengths. Section \ref{IV} presents a necessary and sufficient 
condition to remove 8-cycles-wc for   both SE and ME Ptg-based LDPC codes along with their numerical results.   In Section \ref{V}, we 
summarize our results. 


\section{Preliminaries}\label{II}
  A QC-LDPC code with a lifting degree $N$ can be associated to an exponent matrix $B=[\vec{B}_{ij}]$ and a base matrix $W=[W_{ij}]$ of the same size $c\times d$, where for $0\leq i\leq c-1$ and $0\leq j\leq d-1$, $|\vec{B}_{ij}|=W_{ij}$,  $\vec{B}_{ij}=(b^{1}_{ij},b^{2}_{ij},\ldots, b^{l}_{ij})$, $b^{r}_{ij}\in \lbrace 0,1,\ldots,N-1\rbrace$ and $\;b^{r}_{ij}\neq b^{r^\prime}_{ij}$ for $\;1\leq r<r^\prime\leq l,\;l\in {\mathbb N}$. If $W_{ij}=0$, then  $B_{ij}=(\infty)$. If Ptg is SE, then all elements of $W$ are 0,1. If Ptg is ME, then $W$ contains entries bigger than 1.  Substituting vectors of $B$  by $ H_{ij}=I^{b^{1}_{ij}}+I^{b^{2}_{ij}}+\cdots +I^{b^{l}_{ij}},$ where $I^{b^{r}_{ij}}$ is a circulant permutation matrix (CPM) of dimension $N\times N$, and  $\infty$ elements by a zero matrix of size $N\times N$ yields a parity-check matrix whose null space  gives a QC-LDPC code. The top row of $I^{b^{r}_{ij}}$ contains 1 in the $b^{r}_{ij}$-th position and other entries of that row are zero. The $s$-th row of CPM is formed by $s$ right cyclic shifts of the first row. A necessary and sufficient condition for the existence of $2k$-cycles in  the TG of QC-LDPC codes provided in \cite{2004} is\begingroup\fontsize{8.5pt}{8.5pt}\begin{align}\label{Equation}
	\sum_{i=0}^{k-1} \left( b^{r^{}_i}_{m_{i}n_{i}} - b^{r^{\prime}_i}_{m_{i}n_{i+1}} \right)\equiv0\pmod {N} , 
\end{align}\endgroup
where $b^{r^{}_i}_{m_{i}n_{i}}$ is the $r^{}_i$-th entry of $\vec{B}_{m_{i}n_{i}}$, $\vec{B}_{m_{i}n_{i}}$ is the $(m_i n_i)$-th entry of $B$, $n_{k}=n_{0}$, $r^{}_{i}\neq r^{\prime}_{i}$ if $n_{i} = n_{i+1}$, and $ r^{\prime}_{i}\neq r^{}_{i+1}$ if $m_{i} =m_{i+1}$.
\begin{definition}
 For a bipartite graph ${\it G}$ corresponding to an ETS, a $\emph{variable node}$ (VN) $\emph{graph}$ is constructed by removing all degree-1 c-nodes, defining  v-nodes of ${\it G}$ as its vertices and degree-2  c-nodes  connecting the v-nodes in ${\it G}$   becoming edges. 
 \end{definition}
 \begin{definition}
  A cycle is a  $\emph{chordless cycle}$,  denoted by cl-cycle, if outside the cycle there  is no connection between two v-nodes of it, or between a 
v-node and a c-node of that cycle.  Otherwise, it is defined as a 
$\emph{cycle with a chord}$ which is denoted by cycle-wc.
\end{definition} 

\section{Chordless Short Cycles}\label{III}
In this section, we consider cycles in terms of their chords. We prove 
that in a girth-6 LDPC code all 6-cycles, and in a girth-8 LDPC code 
all 8-cycles and 10-cycles, are free of a chord. We show that without increasing
girth we can improve significant features of a code such as 
$d_{\min}$ and TSs which influence the performance of a code in the 
error floor region.  
\begin{theorem}\label{chordlesscycles}
  In a girth-6 TG, 6-cycles are chordless; an 8-cycle has a chord if only if two v-nodes of the cycle are connected to only a common c-node outside the cycle. In a girth-8 TG, 8-cycles and 10-cycles are chordless; 12-cycles 
have a chord if only if two v-nodes of the cycle are connected to 
a common c-node outside the cycle. 
\end{theorem}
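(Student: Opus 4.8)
The plan is to prove all four assertions uniformly by classifying the chords of a cycle $C$ of length $2k$ in the TG and, for each class, computing the length of the shortest cycle that the chord closes; the girth hypothesis then forbids any chord whose presence would force a cycle of length less than $g$. First I would fix the cyclic labelling $v_0,c_0,v_1,c_1,\ldots,v_{k-1},c_{k-1}$ around $C$ and recall that, by the definition of cycle-wc, a chord is of exactly one of two types: type (A), a single edge joining a v-node $v_i$ and a c-node $c_j$ of $C$ that is not itself an edge of $C$; or type (B), a c-node $c^*$ lying outside $C$ that is adjacent to two v-nodes $v_i,v_j$ of $C$, giving a length-$2$ path $v_i - c^{*} - v_j$. (These two types are exhaustive: the definition records only v--c and v--v connections, the shortest of which are a direct edge and a single shared external c-node.) A chord of either type splits $C$ into two arcs of lengths $a,b$ with $a+b=2k$ and closes two new cycles. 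In type (A) the two arcs have odd length and the closed cycles have lengths $a+1$ and $b+1$; in type (B) the arcs have even length and, since $c^{*}$ is distinct from every vertex of $C$, the closed walks $c^{*} - v_i - \cdots - v_j - c^{*}$ are genuine cycles of lengths $a+2$ and $b+2$.

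Writing $\ell=1$ for type (A) and $\ell=2$ for type (B), the shorter of the two created cycles has length $\min(a,b)+\ell$, which must be at least $g$; since $\min(a,b)\le k$, a chord with parameter $\ell$ can exist only when $k\ge g-\ell$. Next I would substitute the two girth values. For $g=6$, type (A) requires $k\ge5$ and type (B) requires $k\ge4$, so a $6$-cycle ($k=3$) admits neither type and is therefore chordless, whereas an $8$-cycle ($k=4$) excludes type (A) but can admit type (B). For $g=8$, type (A) requires $k\ge7$ and type (B) requires $k\ge6$, so both $8$-cycles ($k=4$) and $10$-cycles ($k=5$) admit no chord and are chordless, whereas a $12$-cycle ($k=6$) excludes type (A) but can admit type (B). This settles the three chordless assertions and reduces the two remaining statements to an analysis of type-(B) chords only.

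Finally, for the two characterization statements I would pin down the surviving type-(B) configuration. In the $8$-cycle, $g=6$ case the requirement $\min(a,b)+2\ge6$ forces $\min(a,b)\ge4$, and with $a+b=8$ and $a,b$ even this leaves only $a=b=4$; in the $12$-cycle, $g=8$ case $\min(a,b)\ge6$ together with $a+b=12$ forces $a=b=6$. Hence the sole admissible chord is a type-(B) chord joining the two diametrically opposite v-nodes through a common c-node outside the cycle, which is exactly the stated condition; the converse is immediate, since by definition any such shared external c-node is already a chord. The step requiring the most care is the parity bookkeeping — recording that v-to-v arcs have even length while v-to-c arcs have odd length, and checking that $c^{*}$ genuinely lies off $C$ so that the created cycle lengths $a+2,b+2$ (and $a+1,b+1$) are correct; once this is in place, the inequality $\min(a,b)+\ell\ge g$ closes every case mechanically.
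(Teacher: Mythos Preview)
Your argument is correct and rests on the same underlying observation as the paper's proof---that any chord of $C$ closes a strictly shorter cycle, which the girth hypothesis then forbids---so there is no gap. The presentation, however, is organised differently. The paper treats only the $8$-cycle in a girth-$6$ TG explicitly: it labels the cycle $v_1c_1v_2c_2v_3c_3v_4c_4$, lists the eight candidate internal edges one by one, observes that each produces a $4$-cycle, and then argues that an external common c-node between two v-nodes already sharing a c-node in $C$ also yields a $4$-cycle, leaving only the pairs $\{v_1,v_3\}$ and $\{v_2,v_4\}$; the remaining three assertions are dismissed with ``the other proofs are similar.'' You instead parameterise uniformly by the arc lengths $a,b$ and the chord type $\ell\in\{1,2\}$, reduce everything to the single inequality $\min(a,b)+\ell\ge g$, and then read off all four claims from the constraint $k\ge g-\ell$. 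What your approach buys is that all four statements fall out of one computation with no appeal to ``similarly,'' and you additionally obtain that in the two characterisation cases the chord must join \emph{diametrically opposite} v-nodes ($a=b=k$), which the theorem statement does not make explicit but the paper's proof also observes for the $8$-cycle. What the paper's approach buys is concreteness: the reader sees the forbidden $4$-cycles directly without tracking arc parities.
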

\begin{proof}
  We consider only the 8-cycle in a girth-6 TG case; the other proofs are similar. Suppose $v_1c_1v_2c_2v_3c_3v_4c_4v_1$ is an 8-cycle, $C$.  The existence of one of the edges $v_1c_3,v_1c_2$, $v_2c_3,v_2c_4$, $v_3c_1,v_3c_4$, $v_4c_1$ or $v_4c_2$ outside $C$ causes a 4-cycle. Hence, connecting a c-node and v-node outside $C$ is impossible and  chords in an 8-cycle occur when two v-nodes are connected to a common c-node outside the cycle. Now, if those v-nodes have also a common connection in the cycle, then a 4-cycle is obtained. Thus, the existence of a c-node between $v_1,v_3$ or between $v_2,v_4$ is necessary and sufficient to have an 8-cycle-wc. \hfill
\end{proof}

\begin{theorem}\label{chord8cycle}
Let a 4-cycle free TG with column weight $\gamma$ be given. Then, in an $(a,b)$ ETS with cl-8-cycles
\begin{itemize}
\item   the number of edges of a VN graph is $|E|\leq\frac{a^3}{4a-3}$;
\item the parameters $a,b,\gamma$ satisfy the inequality 
  $b\geq a\gamma-\frac{2a^3}{4a-3}$, and if $b<a$, then   $a\geq2\gamma-2$.
\end{itemize} 
\end{theorem}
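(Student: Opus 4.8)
The plan is to pass to the VN graph $G$ of the ETS and reduce the whole statement to the edge bound plus elementary algebra. The edges of $G$ are precisely the degree-$2$ c-nodes, and the $b$ odd-degree c-nodes are exactly the degree-$1$ ones; since each v-node has degree $\gamma$, equating the v-node and c-node endpoint counts of the induced subgraph gives the identity $a\gamma=b+2|E|$, i.e. $b=a\gamma-2|E|$. This identity already welds the two bullets together: the bound on $b$ is just the bound on $|E|$ rewritten, so the only real content is the first bullet.

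To read off what the hypothesis gives, note that a $4$-cycle of $G$ is an $8$-cycle of the TG, and by Theorem~\ref{chordlesscycles} such an $8$-cycle carries a chord exactly when two diagonally opposite v-nodes share an outside c-node; that c-node joins two v-nodes of the ETS, hence has degree $2$ and is a diagonal edge of the $4$-cycle of $G$. Thus ``cl-$8$-cycles'' is equivalent to ``$G$ has no chorded $4$-cycle'', i.e. $G$ is diamond-free ($\{K_4-e\}$-free); equivalently every neighbourhood of $G$ induces a matching. The crucial consequence is that distinct triangles of $G$ are edge-disjoint, so if $T$ is the number of triangles then $3T\le|E|$.

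The heart of the argument is the bound $|E|\le\frac{a^{3}}{4a-3}$ for such a $G$. If $|E|\le\lfloor a^{2}/4\rfloor$ we are done, since $\lfloor a^{2}/4\rfloor<\frac{a^{3}}{4a-3}$. Otherwise write $|E|=\lfloor a^{2}/4\rfloor+t$ with $t\ge1$ and invoke triangle supersaturation (Erd\H{o}s--Rademacher): a graph with this many edges has at least $t\lfloor a/2\rfloor$ triangles. Combining with the edge-disjointness bound $T\le|E|/3$ gives $t\lfloor a/2\rfloor\le\frac{\lfloor a^{2}/4\rfloor+t}{3}$, which (taking $a$ even, the odd case being analogous) solves to $t\le\frac{a^{2}}{6a-4}$; the resulting $t=O(a)$ is well inside the range where the supersaturation estimate applies. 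Hence $|E|\le\frac{a^{2}}{4}+\frac{a^{2}}{6a-4}\le\frac{a^{3}}{4a-3}$, the last step reducing to $16a-12\le18a-12$. The main obstacle lives precisely here: the standard $C_4$-free cherry count is unavailable, because two non-adjacent v-nodes may share arbitrarily many degree-$2$ c-nodes (a complete-bipartite-like $G$), so no uniform codegree bound exists and one must instead exploit edge-disjointness of triangles together with a supersaturation estimate, checking the constant carefully (it is tight at $a=3$, attained by $K_3$).

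Finally, the second bullet follows by feeding $|E|\le\frac{a^{3}}{4a-3}$ into $b=a\gamma-2|E|$, giving $b\ge a\gamma-\frac{2a^{3}}{4a-3}$. For the implication $b<a\Rightarrow a\ge2\gamma-2$ I would argue contrapositively: if $a\le2\gamma-3$ then $\gamma-1\ge\frac{a+1}{2}$, and the elementary inequality $\frac{a+1}{2}\ge\frac{2a^{2}}{4a-3}$ (equivalent to $a\ge3$, which holds since an $8$-cycle uses four v-nodes) yields $a(\gamma-1)\ge\frac{2a^{3}}{4a-3}\ge2|E|$, whence $b=a\gamma-2|E|\ge a$. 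Therefore $b<a$ forces $a\ge2\gamma-2$, completing the argument.
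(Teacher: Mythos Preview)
Your proof follows essentially the same strategy as the paper: both observe that the cl-$8$-cycle hypothesis makes the VN graph diamond-free, hence distinct triangles are edge-disjoint ($T\le|E|/3$), and then combine this upper bound on $T$ with a triangle-supersaturation lower bound to force $|E|\le a^{3}/(4a-3)$, after which $b=a\gamma-2|E|$ does the rest. The only difference is the specific supersaturation input: the paper uses the Bollob\'as-type inequality $T\ge\frac{a}{9}(4|E|-a^{2})$, which plugs directly into $T\le|E|/3$ to give exactly $|E|\le\frac{a^{3}}{4a-3}$, whereas you use Erd\H{o}s--Rademacher ($T\ge t\lfloor a/2\rfloor$) and then relax the resulting bound to the same expression.

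One small caveat on the last paragraph: your justification ``$a\ge3$, which holds since an $8$-cycle uses four v-nodes'' does not stand, because the hypothesis is that all $8$-cycles are chordless, not that the ETS contains one. The paper instead invokes $a\ge\gamma+1$ (from the $4$-cycle-free assumption, citing \cite{TCOM}) and then does a short case check for $\gamma\le2$; you can patch your contrapositive the same way, or simply dispose of $a\in\{1,2\}$ by hand (for $a\le2$ the simple VN graph has $|E|\le1$, giving $b\ge a\gamma-2\ge a$ whenever $\gamma\ge2$).
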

\begin{proof}
  Let $K_n$ be the complete graph on $n$ vertices. A $K_4$-free graph on $n$ vertices has at most $\frac{n^2}{3}$ edges \cite{Extermal}. The VN graph of a cl-8-cycle $(a,b)$ ETS is $K_4$-free, so its maximum number of edges is $\frac{a^2}{3}$. Moreover, a triangle-free graph on $n$ vertices has at most $\frac{n^2}{4}$ edges \cite{Extermal}. If we include the number of edges of all $K_3$-free VN graphs of  $(a,b)$ ETSs  in a set $R$ and put the number of edges of all VN graphs of ETSs free of 8-cycles-wc in a set $Y$, then $R\subset Y$. Thus, the maximum integer in $Y$ is bigger than or equal to the maximum integer in $R$, which is $\frac{a^2}{4}$. Hence, for a VN graph of an $(a,b)$ ETS free of 8-cycles-wc and with maximum edges $|E|$, we have $|E|\geq\frac{a^2}{4}$. Thus, the VN graph of a cl-8-cycle $(a,b)$ ETS with the maximum number of edges satisfies $\frac{a^2}{4}\leq|E|\leq\frac{a^2}{3}$ in which the number of triangles is at least $\frac{a}{9}(4|E|-a^2)$; see \cite{Extermal}. Since the VN graph is free of the graphs in Fig. \ref{FigCW3} (a) and (b), no two triangles in the VN graph have an edge in common. Thus, the maximum number of triangles is $\frac{|E|}{3}$. Hence,  $\frac{|E|}{3}\geq\frac{a}{9}(4|E|-a^2)$ that yields $|E|\leq\frac{a^3}{4a-3}$.  

  In an LDPC code with column weight $\gamma$, an $(a,b)$ ETS whose 
VN graph contains $|E|$ edges satisfies the equality $b=a\gamma-2|E|$ and since the TG is 4-cycle free, $a\geq\gamma+1$ ;
see \cite{TCOM}. Since all 8-cycles in the ETS are chordless, 
$|E|\leq\frac{a^3}{4a-3}$. Hence, $b \geq a\gamma-\frac{2a^3}{4a-3}$. 
Now, if $b<a$, then we have
$1>\frac{b}{a}>\gamma-\frac{2a^2}{4a-3}$ implying  $4a-4a\gamma+2a^2>3-3\gamma$.  A case analysis shows that the last inequality does not hold for $\gamma\geq3$  and 
$\gamma+1\leq a\leq 2\gamma-3$. For $\gamma=1,2$, we have $a\geq 1$ and $a\geq 3$, respectively. Hence, for each $\gamma$ we have 
 $a\geq2\gamma-2$. \hfill
\end{proof}
\begin{center}
\begin{figure}
\centering
\includegraphics[scale=.3]{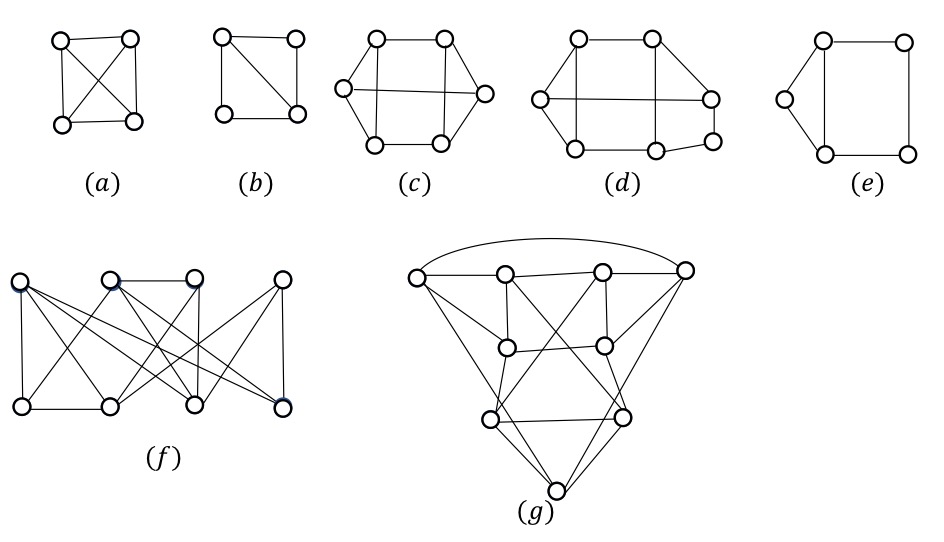}
\caption{ Figures (a), (b) are the VN graphs of $(4,0)$ and $(4,2)$ ETSs (also VN graphs of an 8-cycle-wc). Figures (c), (d) and (e) are the VN graphs of $(6,0),\ (7,1), (5,3)$ ETSs with $\gamma=3$. Figures (f) and (g) are the VN graphs of $(8,4)$  and $(9,0)$ ETSs with $g=6,\gamma=4$.} \label{FigCW3}
\end{figure}
\end{center} 
\begin{Example}\label{ExCW3}
Let $\gamma=3$. For different $b$s, we find the smallest size of an 
ETS containing at least one 6-cycle and free of an 8-cycle-wc. Suppose $b=0$, 
then the smallest $a$ satisfying   $b\geq a\gamma-\frac{2a^3}{4a-3}$ is 
$a=6$ whose VN graph with maximum number of edges is in Fig. 
\ref{FigCW3} (c). Although for $b\geq1$ the smallest $a$ satisfying 
  $1\geq3a-\frac{2a^3}{4a-3}$ is $a=5$, the minimum $a$ obtained for $b=1,2$ is $7,6$, respectively. The VN graph with maximum number of 
edges of $(7,1)$ ETS with $g=6$ and free of an 8-cycle-wc is in Fig. \ref{FigCW3} (d). Removing an edge from Fig. \ref{FigCW3} 
(c) results in a VN graph of a $(6,2)$ ETS. Fig. \ref{FigCW3} (e) is the VN graph of a 
$(5,3)$ ETS. 
\end{Example}

According to Theorem \ref{chordlesscycles}, in a girth-8 TG all 8-cycles 
are chordless. Hence, if for specific $\lambda$s and $b$s the 
smallest $a$ for a 6-cycle free ETS is less than the one of an ETS 
containing 6-cycles but free of an 8-cycle-wc, then we consider the 
former as the smallest $a$ in a TG with cl-8-cycles.
\begin{Example}\label{smallestsize}
Let $\gamma=4$. If $b=0$, then in a girth-8 LDPC code we have $a\geq8$; 
see \cite{TCOM}. Now, we prove that if an ETS with cl-8-cycles contains 
a 6-cycle, then the smallest size is more than 8. Since $b=0$, all 
v-nodes of the VN graph have degree 4. Suppose $v_0$ is a v-node 
connected to four c-nodes $c_1,c_2,c_3,c_4$. Since all c-nodes have 
degree 2, each c-node is connected to other v-node. Let $v_i$ be 
connected to $c_i$. Without loss of generality, suppose $v_0,v_1,v_2$ 
are the v-nodes of a 6-cycle. If a c-node connects $v_1$ or $v_2$ to 
one of the v-nodes $v_3,v_4$, then an 8-cycle-wc is obtained which 
is impossible. Thus, there must be v-nodes $v_5$ and $v_6$ for two 
different c-nodes connected to $v_1$. If there is a c-node between 
$v_2$ and $v_5$ or between $v_2$ and $v_6$, then we have an 8-cycle-wc 
with v-nodes $v_0,v_1,v_5,v_2$ or $v_0,v_1,v_6,v_2$, respectively. 
Therefore, there is no c-node between $v_2$ and any of the v-nodes 
$v_3,v_4,v_5,v_6$. Hence, there must be two v-nodes $v_7,v_8$ for 
two different c-nodes connected to $v_2$. This process guarantees the 
non-existence of an $(8,0)$ ETS containing 6-cycles and cl-8-cycles.
\end{Example}

In general, in an LDPC code with $g=6, \gamma=4$  all $(a,b)$ ETSs, $5\leq a\leq8,b\leq2$, and $(6,4),(7,4)$ ETSs  which contain 6-cycles have at least one 8-cycle-wc.  Fig. \ref{FigCW3} (f) and (g) show that there are $(8,4)$ and $(9,0)$ ETSs including 6-cycles and  cl-8-cycles. Since an LDPC code with $g=8,\gamma=4$ is free of all $(a,b)$ ETSs, $5\leq a<8,b\leq2$, and $(6,4)$ ETSs but contains $(8,0), (8,2), (7,4)$ ETSs, the minimum size of an ETS with cl-8-cycles for $b=0,2,4$ is $a=8,8,7$, respectively. Similar to Examples \ref{ExCW3}, \ref{smallestsize} we use Theorem \ref{chord8cycle} to present the lower bound on the size of an ETS  in Table \ref{Tabchord8-cycle} for an  LDPC code with $3\leq\gamma\leq6$.   The results of Theorem \ref{chordlesscycles} can be also extended to irregular LDPC codes.

\begin{theorem}
  Let an irregular LDPC code with column weights $\{d_1,\ldots,d_{\ell}\}$ and free of 8-cycles-wc be given where  the minimum v-node degree is $d_1=\gamma$. The  lower bound on the smallest size of an $(a,b)$ ETS, $b<a$, is $2\gamma-2$.
\end{theorem}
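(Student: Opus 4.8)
The plan is to reduce this to the regular case already handled in the second bullet of Theorem~\ref{chord8cycle}, by isolating exactly where column-regularity entered there. In that proof the only identity that invoked a constant column weight was $b=a\gamma-2|E|$, where $|E|$ is the number of degree-2 c-nodes (equivalently, edges of the VN graph) and $b$ the number of degree-1 c-nodes. Everything else --- the $K_4$-freeness of the VN graph of a cl-8-cycle ETS, the triangle-free comparison giving $|E|\geq a^2/4$, the ``no two triangles share an edge'' consequence of avoiding the configurations in Fig.~\ref{FigCW3}(a),(b), and the resulting edge bound $|E|\leq\frac{a^3}{4a-3}$ --- is a purely structural statement about a $4$-cycle-free Tanner graph that is free of $8$-cycles-wc, and makes no reference to the \emph{degrees} of the v-nodes. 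Hence that bound on $|E|$ survives verbatim in the irregular setting.

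First I would re-derive the counting relation in the irregular form. For any ETS, summing v-node degrees counts all v-to-c edges, and each degree-2 c-node consumes two of them while each degree-1 c-node consumes one; hence $\sum_{v}\deg(v)=2|E|+b$. Because every v-node has degree at least $d_1=\gamma$, we get $\sum_{v}\deg(v)\geq a\gamma$, and therefore
\[
b=\sum_{v}\deg(v)-2|E|\;\geq\;a\gamma-2|E|.
\]
This is the crucial point: the equality of the regular case becomes an inequality in the \emph{same} direction, so the lower bound on $b$ only strengthens. Combining with the degree-free bound $|E|\leq\frac{a^3}{4a-3}$ yields $b\geq a\gamma-\frac{2a^3}{4a-3}$, exactly the inequality obtained in Theorem~\ref{chord8cycle}. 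From here the argument is identical: assuming $b<a$ gives $1>\gamma-\frac{2a^2}{4a-3}$, i.e. $2a^2-4a\gamma+4a>3-3\gamma$, whose case analysis rules out $\gamma+1\le a\le2\gamma-3$ for $\gamma\ge3$ (with the small-$\gamma$ cases checked separately), forcing $a\ge2\gamma-2$. The lower end $a\ge\gamma+1$ of this range remains available because in a $4$-cycle-free graph a single v-node of degree $\ge\gamma$ already forces $\gamma$ distinct neighboring c-nodes and, through the degree-2 ones, $\gamma$ distinct further v-nodes, so every ETS here still has at least $\gamma+1$ v-nodes.

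The one subtlety I would make explicit is the inequality direction: one must confirm that having \emph{minimum} rather than exact degree $\gamma$ pushes $b$ up, not down, so that the chain $b\ge a\gamma-2|E|\ge a\gamma-\frac{2a^3}{4a-3}$ is legitimate. This is just the monotonicity that, for a fixed VN graph, replacing any v-node by one of higher degree can only create additional degree-1 c-nodes and thus increase $b$. Beyond pinning down this monotonicity I expect no genuine obstacle, since the combinatorial heart of the estimate --- the edge bound from chordlessness --- was already degree-independent.
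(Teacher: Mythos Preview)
Your argument is correct, but it is not the paper's argument. The paper proceeds by an explicit reduction to the regular case: given an $(a,b)$ ETS with $b<a$ in the irregular code, it deletes $d_i-\gamma$ degree-$1$ c-nodes from each v-node of degree $d_i>\gamma$, obtaining an $(a,b')$ ETS with $b'\le b<a$ that lives in a $\gamma$-regular code free of $8$-cycles-wc, and then invokes Theorem~\ref{chord8cycle} wholesale to conclude $a\ge 2\gamma-2$. You instead leave the ETS untouched and observe that the only place regularity enters Theorem~\ref{chord8cycle} is the identity $b=a\gamma-2|E|$, which in the irregular setting becomes the one-sided inequality $b\ge a\gamma-2|E|$ via $\sum_v\deg(v)\ge a\gamma$; since the structural bound $|E|\le\frac{a^3}{4a-3}$ is degree-free, the rest of the proof goes through verbatim.

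The two approaches are morally the same monotonicity (``higher degrees can only increase $b$''), but packaged differently. The paper's deletion trick is shorter to state but tacitly assumes each high-degree v-node carries at least $d_i-\gamma$ degree-$1$ c-nodes to delete, which is not automatic (a v-node could have VN-degree exceeding $\gamma$). Your inequality route sidesteps this entirely and is in that sense more robust; it also makes transparent exactly which ingredient of Theorem~\ref{chord8cycle} needs adjusting. One small tightening: your justification of $a\ge\gamma+1$ is slightly loose (not every one of the $\gamma$ neighboring c-nodes is degree $2$); the clean version is that $4$-cycle-freeness forces the VN graph to be simple, so each v-node has VN-degree at most $a-1$, whence $a\le\gamma$ would give every v-node at least one degree-$1$ c-node and thus $b\ge a$.
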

\begin{proof}
  We consider an $(a,b)$ ETS, $b<a$, free of 8-cycles-wc and remove $d_i-\gamma$ degree-1 c-nodes from $i$-th v-node with $d_i>\gamma$. The resulting structure is an $(a,b')$ ETS with $b'<b<a$ which belongs to an LDPC code with column weight $\gamma$ and free of 8-cycles-wc. According to Theorem \ref{chord8cycle}, for the $(a,b')$ ETS with $b'<a$ we have $a\geq2\gamma-2$.\hfill
\end{proof}

\begin{table}[ht]
\setlength{\tabcolsep}{.7 pt}
\centering
\caption{ The lower bounds obtained on the size of  $(a,b)$ ETSs of variable-regular LDPC codes with $3\leq\gamma\leq6$, $g=6$  and cl-8-cycles.}
\begingroup\fontsize{8.5 pt}{8.5 pt}
\begin{tabular}{|l|c|c|c|c|c|c|c|c|c|c|c|c|c|c|c|c|c|c|c|c|c|c|} \hline
$\gamma=$&3&3&3&3&4&4&4&4&4&5&5&5&5&5&5&6&6&6&6&6&6&6\\ \hline    
$b=$ & 0&1&2&3&0&1&2&3&4&0&1&2&3&4&5&0&1&2&3&4&5&6\\ \hline
$a\geq$ & 
6&7&6&5&8&$-$&8&$-$&7&10&11&10&11&10&9&12&$-$&12&$-$&12&$-$&11  \\
\hline
\end{tabular}
\endgroup
\label{Tabchord8-cycle}
\end{table}
\begin{theorem}\label{min1}
  For an LDPC code with column weight $\gamma\geq3$, free of 4-cycles 
and 8-cycles-wc we have $d_{\min}\geq2\gamma$.
\end{theorem}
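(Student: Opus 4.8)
The plan is to recast the bound in terms of codeword supports. A nonzero codeword of weight $a$ is precisely a set $S$ of $a$ v-nodes such that every c-node adjacent to $S$ is joined to $S$ an even number of times; equivalently, $S$ is an $(a,0)$ TS in which every participating c-node has even degree $\geq 2$. So it suffices to show that no such $S$ exists with $a<2\gamma$. I would first dispose of the \emph{elementary} case, where every c-node adjacent to $S$ has degree exactly $2$ (the only even value permitted in an ETS), and then treat the remaining case in which some c-node has degree $\geq 4$.

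In the elementary case $S$ is an $(a,0)$ ETS whose $8$-cycles are all chordless, so Theorem~\ref{chord8cycle} applies with $b=0$. Substituting $b=0$ into $b\geq a\gamma-\frac{2a^3}{4a-3}$ and clearing the positive factor $4a-3$ gives $\gamma(4a-3)\leq 2a^2$, i.e.
\[
2a^2-4\gamma a+3\gamma\geq 0 .
\]
The main computation is to check that this inequality fails for every integer $a$ with $1\leq a\leq 2\gamma-1$. Since the left-hand side is an upward parabola in $a$ with vertex at $a=\gamma$ and whose larger root lies strictly below $2\gamma$, it is enough to evaluate it at the right endpoint: at $a=2\gamma-1$ it equals $2-\gamma<0$ for $\gamma\geq 3$, while at $a=2\gamma$ it equals $3\gamma>0$. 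Hence the smallest admissible size is $a=2\gamma$, which rules out elementary codewords of weight below $2\gamma$.

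The delicate part, and what I expect to be the main obstacle, is the non-elementary case, where $S$ carries a c-node $c$ of degree $\geq 4$, say adjacent to $v_1,v_2,v_3,v_4\in S$; here Theorem~\ref{chord8cycle} does not apply directly, since its VN-graph counting only registers degree-$2$ c-nodes. The idea I would pursue is to show that such a high-degree c-node is itself forced to be the chord of an $8$-cycle, contradicting the hypothesis. Concretely, using that every c-node on $S$ has even degree, I would locate two of $v_1,\dots,v_4$ that are opposite v-nodes of an $8$-cycle built from c-nodes other than $c$; by the characterisation in Theorem~\ref{chordlesscycles}, $c$ then joins two opposite v-nodes of that $8$-cycle and hence produces an $8$-cycle-wc. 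Establishing that such an $8$-cycle always exists when $a<2\gamma$ is the crux: the configuration is cramped enough that two neighbours of $c$ must share two common neighbours in the VN graph, and making this forcing argument uniform in $\gamma$ is the step that requires genuine care. Once every admissible codeword is shown to be elementary, the bound $a\geq 2\gamma$ from the elementary case completes the proof.
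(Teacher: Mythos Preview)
Your treatment of the elementary case is correct and matches the paper: both invoke Theorem~\ref{chord8cycle} with $b=0$ and verify the inequality fails for $a=2\gamma-2$ and $a=2\gamma-1$. (One nit: saying ``it is enough to evaluate at the right endpoint'' is not quite right for a convex parabola; you need both endpoints negative, or, more simply, just use the bound $a\ge 2\gamma-2$ already supplied by Theorem~\ref{chord8cycle} and check only those two values, as the paper does.)

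The non-elementary case, however, is a genuine gap. You propose to show that a c-node $c$ of degree $\ge 4$ must be a chord of some $8$-cycle not through $c$, but you never carry this out, and two obstacles make it unclear the argument can be completed as stated. First, the ``VN graph'' device you invoke is defined for ETSs via degree-$2$ c-nodes only; in a non-elementary TS there need not be any degree-$2$ c-nodes at all, so ``two neighbours of $c$ must share two common neighbours in the VN graph'' has no obvious meaning here. Second, the forcing cannot be size-independent: the paper exhibits (Fig.~\ref{MINDIS}(b),(c)) non-elementary $(a,0)$ TSs \emph{free of} $8$-cycles-wc with a degree-$4$ c-node, so the presence of such a c-node does not by itself create a chord. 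Your argument therefore must exploit $a<2\gamma$ in an essential way, and you have not indicated how.

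The paper's route is different and more direct: it grows a layered tree from a degree-$4$ c-node $c$ and counts v-nodes. The four neighbours $v_1,\dots,v_4$ of $c$ each require $\gamma-1$ further c-nodes, all distinct (else a $4$-cycle); any reuse of v-nodes in the next layer is then restricted because a shared v-node between branches of $v_i$ and $v_j$ that also sees the $6$-cycle through $c$ produces an $8$-cycle-wc. This counting gives at least $4\gamma$ v-nodes if no $6$-cycle is present, and at least $4\gamma-6>2\gamma$ for $\gamma\ge 5$ otherwise, with $\gamma=3,4$ handled by explicit constructions. If you want to repair your plan, this layered counting is the missing ingredient; alternatively, you could try to formalise your chord-forcing idea, but you would need to replace the VN-graph language by something valid for arbitrary even-degree c-nodes and make the dependence on $a<2\gamma$ explicit.
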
 
\begin{proof}
It is known that a code $\mathcal{C}$ has minimum 
distance $d_{\min}$ if and only if the TG contains no $(a,0)$ TS 
for $a<d_{\min}$ and there exists at least one $(d_{\min},0)$ TS. 
Since the TG is free of an 8-cycle-wc,   according to Theorem 
\ref{chord8cycle}, 
$a\geq2\gamma-2,b\geq a\gamma-\frac{2a^3}{4a-3}$ which does not hold for $b=0, \gamma\geq3$ if $a=2\gamma-2$ or $a=2\gamma-1$. Thus, that has a hanging $2\gamma-1$. 
Therefore, there is no $(a,0)$ 
ETS for $a<2\gamma$. Now, we prove the non-existence of 
$(a,0)$ non-elementary TSs for $a<2\gamma$. These TSs contain $2\ell$-degree 
c-nodes, $\ell>1$. We consider a structure starting from a c-node. 
Suppose TS contains a 4-degree c-node $c$ in the starting level. 
Then, $c$ is connected to 4 v-nodes $v_1,v_2,v_3,v_4$ in the 
second level. Since every v-node has degree $\gamma$, each v-node 
connected to $c$ is connected to $\gamma-1$ other c-nodes. If two 
of these v-nodes such as $v_1,v_2$ are connected to a common c-node 
$c'$, then a 4-cycle $cv_1c'v_2c$ appears which is impossible. 
Therefore, there are $4(\gamma-1)$ distinct c-nodes in the third 
level. Similarly, if two c-nodes $c_1,c_2$ in the third level, 
which are connected to a common v-node $v$ of the second level, 
are connected to another v-node $v'$, then a 4-cycle $vc_1v'c_2v$ 
appears which is impossible. Hence, in the fourth level there are 
at least $\gamma-1$ distinct v-nodes. If a v-node $v$ in the fourth 
level is connected to two c-nodes $c',c''$ in the third level such 
that $c',c''$ are connected to two different v-nodes like $v_1,v_2$ 
of the second level then TS has a 6-cycle $vc'v_1cv_2c''v$. Thus, 
if the TG is free of 6-cycles, then we have $4(\gamma-1)$ v-nodes 
in the fourth level and the size of TS is more than or equal to 
$4(\gamma-1)+4=4\gamma$. Let the TS have a 6-cycle. As mentioned 
above, in the fourth level there are $\gamma-1$ distinct v-nodes 
for $\gamma-1$ c-nodes connected to $v_1$. If two of these v-nodes 
are connected to c-nodes in the third level connected to $v_2$, 
then we have an 8-cycle-wc, see Fig. \ref{MINDIS} (a). Considering 
this restriction, in Fig. \ref{MINDIS} (b), (c) we show that the 
smallest sizes of a TS for $\gamma=3,4$ are 8 and 10, respectively. 
For $\gamma\geq5$ we can use the TS in Fig. \ref{MINDIS} (c) and 
add $(\gamma-4)$ c-nodes to each v-node of the second level. 
Considering the mentioned restriction, v-nodes in the fourth level 
cannot be connected to a new c-node in the third level. Each of 
these new c-nodes must be connected to a new v-node in the fourth 
level. Up to now, the smallest size of a TS is at least 
$10+4(\gamma-4)=4\gamma-6>2\gamma$ for  $\gamma\geq5$. 
Thus, we need at least $2\gamma$ v-nodes to construct a TS 
containing a 4-degree c-node. Hence, $d_{\min}$ is lower bounded 
by $2\gamma$. \hfill
\end{proof}

\begin{center}
\begin{figure}
\centering
\includegraphics[scale=.3]{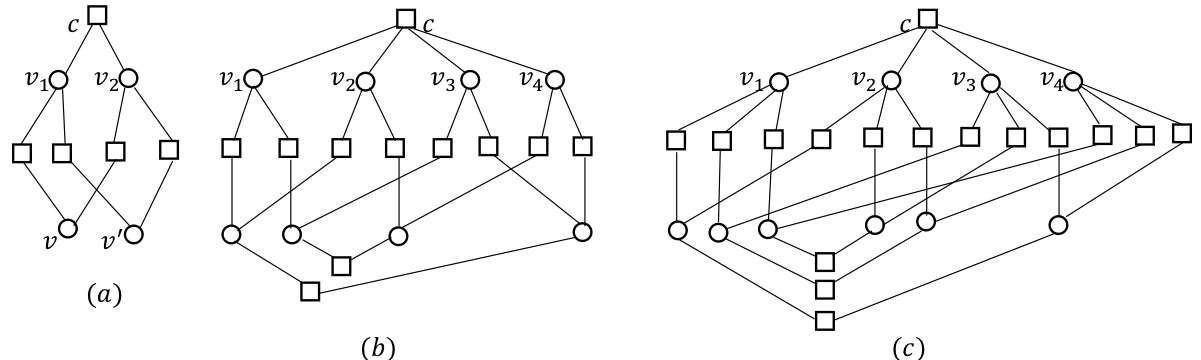}
\caption{ Figure (a) shows an 8-cycle-wc, (b) is an (8,0) non-elementary TS with $\gamma=3$, (c) is a (10,0) non-elementary TS with $\gamma=4$.} \label{MINDIS}
\end{figure}
\end{center} 

  It should be noticed that avoiding chords is not the only way to improve $d_{\min}$. In fact, the graphical results in \cite{Zhao2016} showed that array-based codes with $\gamma=3$ contain 8-cycles-wc and indicated that it is possible for LDPC codes with $\gamma=3$ and 8-cycles-wc  to have $d_{\min}=6$.
\begin{lemma}\label{minsizeg8}
The minimum sizes of an $(a,b)$ ETS, $b<a$, in an LDPC code 
with $g=8,\gamma=3$ and cl-cycles of lengths up to 12 for $b=0,1,2,3$ 
are $a=10,9,8,5$, respectively.
\end{lemma}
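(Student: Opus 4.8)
The plan is to pass to the VN graph $G$ of the ETS (Definition 1) and restate every hypothesis as a forbidden subgraph of $G$. Since $g=8$, the TG has no $4$- or $6$-cycles, so $G$ is simple and triangle-free; a v-node of degree $3-t$ in $G$ carries exactly $t$ degree-$1$ c-nodes, whence $b=3a-2|E(G)|$ and the hypothesis $b<a$ becomes $|E(G)|>a$. By Theorem \ref{chordlesscycles}, forbidding $12$-cycles-wc is equivalent to forbidding any $6$-cycle of $G$ that carries a chord; as $G$ is triangle-free such a chord can only be a long diagonal, so the one extra forbidden configuration is the theta graph $\Theta_{1,3,3}$, i.e.\ two $4$-cycles meeting in exactly one edge ($8$- and $10$-cycles are automatically chordless in a girth-$8$ TG, so they impose nothing). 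Thus the task is, for each $b\in\{0,1,2,3\}$, to find the least $a$ admitting a connected graph of maximum degree $3$ with $|E(G)|=(3a-b)/2$ that is triangle-free and $\Theta_{1,3,3}$-free.

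For the upper bounds I would exhibit explicit witnesses. The Petersen graph is cubic of girth $5$, hence has no $4$-cycle at all and is trivially triangle- and $\Theta_{1,3,3}$-free, realizing the $(10,0)$ ETS; $K_{2,3}$ is triangle-free and, having only five vertices, contains no $6$-cycle, so it realizes the $(5,3)$ ETS. For $(9,1)$ and $(8,2)$ I would display the two graphs with degree sequences $(3^{8},2)$ and $(3^{6},2^{2})$ (in the accompanying figure) and verify directly that every $4$-cycle they contain is edge-disjoint from all others, so that no $\Theta_{1,3,3}$ appears.

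For the lower bounds I would combine three ingredients. First, the triangle-free (Mantel) bound $|E(G)|\le\lfloor a^{2}/4\rfloor$, already invoked in Theorem \ref{chord8cycle}, kills the tiniest candidates $(5,1)$, $(4,2)$ and $(3,3)$, where the required edge count exceeds $\lfloor a^{2}/4\rfloor$. Second, if $G$ had no $4$-cycle it would have girth $\ge 5$; but then the radius-$2$ ball around any degree-$3$ vertex already contains $1+3+3\cdot2=10$ distinct vertices (the Moore bound, attained by the Petersen cage), so no girth-$5$ graph with a degree-$3$ vertex fits on fewer than $10$ vertices --- this excludes girth $5$ for every sub-$10$ candidate and forces a $4$-cycle into $G$. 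Third, starting from such a $4$-cycle $v_1v_2v_3v_4$ I would run the level-by-level neighbour expansion of Example \ref{smallestsize}, tracking the third neighbours of the $v_i$ and their successors and using triangle-freeness together with $\Theta_{1,3,3}$-freeness to force each new layer to consist of fresh vertices. Concretely this disposes of the tight cases $(8,0)$, $(7,1)$, $(6,2)$: a connected cubic triangle-free graph on $6$ or $8$ vertices must be $K_{3,3}$, the cube $Q_3$ or the Wagner graph $V_8$, each of which contains a chorded $6$-cycle, i.e.\ a $\Theta_{1,3,3}$, while $(6,2)$ forces $G=K_{3,3}-e$, again carrying a $\Theta_{1,3,3}$.

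I expect the main obstacle to be precisely these near-optimal lower bounds, where neither the Mantel bound nor the Moore bound is tight on its own: one must show that permitting \emph{isolated} $4$-cycles (which girth-$5$ graphs forbid outright) still cannot undercut the girth-$5$ minima, and this rests on a short but branching analysis of how the neighbours of a $4$-cycle may coincide without creating a triangle or a $\Theta_{1,3,3}$ (the K\H{o}v\'{a}ri--S\'{o}s--Tur\'{a}n/Reiman estimate for $C_4$-free graphs can be used to trim the sparser branches). Keeping that enumeration exhaustive, and simultaneously checking that the proposed $(9,1)$ and $(8,2)$ witnesses are genuinely $\Theta_{1,3,3}$-free, is the delicate step; the reformulation in the first paragraph is what makes it tractable, and it also explains the coincidence with the girth-$10$ minima, since a girth-$10$ code forces $G$ to have girth $\ge 5$ and hence to avoid $4$-cycles entirely.
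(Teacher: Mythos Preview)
Your route is genuinely different from the paper's. The paper's proof is essentially two sentences: it quotes \cite{TCOM} for the baseline girth-$8$ minima $(a,b)=(6,0),(5,1),(4,2),(3,3)$, then asserts (without a displayed argument) that forbidding $12$-cycles-wc removes the $(6,0),(6,2),(7,1),(8,0)$ ETSs, so the next admissible sizes $(10,0),(9,1),(8,2),(5,3)$ are the answer. No upper-bound witnesses are exhibited and no structural analysis is carried out. By contrast, you translate everything into the VN graph, identify the single extra forbidden subgraph $\Theta_{1,3,3}$, and attack both directions with extremal/cage arguments (Mantel, the Moore bound, Petersen, $K_{2,3}$). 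This is more self-contained and explains \emph{why} the new minima coincide with the girth-$10$ values, which the paper does not.

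That said, your write-up has a real gap at the $(7,1)$ lower bound. You list $(7,1)$ among the ``tight cases'' but the only argument you actually give --- the enumeration of connected cubic triangle-free graphs on $6$ or $8$ vertices as $K_{3,3}$, $Q_3$, $V_8$ --- does not touch $7$ vertices with degree sequence $(3^{6},2)$. You need a separate short analysis here: start from the forced $4$-cycle (your Moore argument guarantees one), look at the third neighbours of its four vertices, and show that on only $7$ vertices two of those $4$-cycle vertices must share an outside neighbour, manufacturing a second $4$-cycle on the same edge and hence a $\Theta_{1,3,3}$. Two smaller points: the claim that $(6,2)$ forces $G=K_{3,3}-e$ is true but not immediate --- you should note that a triangle-free non-bipartite graph on $6$ vertices with $\Delta\le 3$ has at most $7$ edges (any chord of a $C_5$ makes a triangle, and a sixth vertex can see at most two non-adjacent $C_5$-vertices), so $8$ edges forces bipartiteness and hence $K_{3,3}-e$; and the promised $(9,1)$ and $(8,2)$ witnesses should actually be drawn or described, since the paper does not supply them either.
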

\begin{proof}
The smallest sizes of an $(a,b)$ ETS with $b<a$ in an LDPC 
code with $g=8,\gamma=3$ for $b=0,1,2,3$ are $a=6,5,4,3$; see \cite{TCOM}. 
According to Theorem \ref{chordlesscycles}, if $g=8$, then the shortest cycles-wc are of length 12. Avoiding 12-cycles-wc causes the removal 
of all $(6,0), (6,2), (7,1), (8,0)$ ETSs. Consequently, if $b<a$ the 
smallest ETSs for different $b$s are $(10,0), (9,1), (8,2), (5,3)$ 
ETSs.\hfill
\end{proof}

\begin{theorem}\label{mindisg8}
The $d_{\min}$ of an LDPC code with $g=8,\gamma=3$  and  cl-cycles of lengths up to 12 is lower bounded by 10. 
\end{theorem}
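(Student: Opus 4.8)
The plan is to prove the bound through the standard characterization used in the proof of Theorem \ref{min1}: a code has $d_{\min}$ equal to the size of the smallest $(a,0)$ TS, so it suffices to show that a girth-$8$, $\gamma=3$ TG that is free of $12$-cycles-wc contains no $(a,0)$ TS with $a<10$. I would split the argument according to whether the $(a,0)$ TS is elementary or not. Recall also that, by Theorem \ref{chordlesscycles}, the hypothesis ``cl-cycles of lengths up to $12$'' is exactly the combination $g=8$ together with the absence of $12$-cycles-wc, so both ingredients are available.

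For the elementary case, Lemma \ref{minsizeg8} already settles the matter: the smallest $(a,0)$ ETS in such a code has $a=10$, hence no $(a,0)$ ETS exists for $a<10$. The whole task therefore reduces to ruling out non-elementary $(a,0)$ TSs of size below $10$.

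For the non-elementary case I would run a level-by-level (breadth-first) expansion starting from a c-node $c$ of degree at least $4$, which such a TS must contain. Since $\gamma=3$ and the TG is $4$-cycle-free, the four v-nodes $v_1,\dots,v_4$ adjacent to $c$ each contribute two further, pairwise distinct c-nodes, giving $8$ distinct c-nodes at the next level (any coincidence would produce a $4$-cycle through $c$). Because every c-node of an $(a,0)$ TS has even degree $\ge 2$, each of these $8$ c-nodes is adjacent to at least one additional v-node. The absence of $4$-cycles forbids two c-nodes attached to the same $v_i$ from sharing such a v-node, and the absence of $6$-cycles (guaranteed by $g=8$) forbids c-nodes attached to different $v_i$'s from sharing one; these $8$ new v-nodes are thus pairwise distinct and distinct from $v_1,\dots,v_4$. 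This already yields $a\ge 4+8=12>10$.

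Combining the two cases, the smallest $(a,0)$ TS has $a\ge 10$, so $d_{\min}\ge 10$. The delicate point, exactly as in Theorem \ref{min1}, is the bookkeeping in the non-elementary expansion: I must check that the $4$-cycle and $6$-cycle constraints genuinely keep all counted v-nodes distinct, and treat starting c-nodes of degree larger than $4$ (which only enlarges the count). I expect this distinctness verification, rather than the elementary case, to be the main obstacle, although here the girth-$8$ hypothesis makes it comparatively mild, since it already pushes the non-elementary size safely above the binding elementary value $10$ obtained from Lemma \ref{minsizeg8}.
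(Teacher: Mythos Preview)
Your proposal is correct and follows essentially the same route as the paper: invoke Lemma~\ref{minsizeg8} for the elementary case, then handle the non-elementary case by the level-by-level expansion from a c-node of degree at least~$4$, which under $g=8$ forces $a\ge 4\gamma=12$. The only cosmetic difference is that the paper compresses the non-elementary step by citing the $4\gamma$ bound already derived inside the proof of Theorem~\ref{min1}, whereas you spell the expansion out explicitly.
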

\begin{proof}
According to Lemma \ref{minsizeg8}, the minimum size of $(a,0)$ ETSs is 10. By the proof of Theorem \ref{min1}, if $g=8$, then for $(a,0)$ non-elementary TSs we have  $a\geq 4\gamma$ and for  $\gamma=3$ we have $a\geq12$ . Thus, $d_{\min}\geq10$. \hfill
\end{proof}

\section{Necessary and sufficient condition to remove 8-cycles with a chord}\label{IV}

The VN graph of an 8-cycle-wc consists of two triangles which have one edge in common, Fig. \ref{FigCW3} (b). Each triangle in the VN graph corresponds to a 6-cycle in the TG. Thus,  a necessary and sufficient condition to remove an 8-cycle-wc is to avoid   the occurrence of two 6-cycles with one common c-node. The constraints to avoid these 6-cycles are given in Subsections \ref{A} and \ref{B}.  We  propose this condition for the exponent matrix to provide an explicit technique to avoid these 8-cycles-wc.  
\vspace{-0.25cm}
\subsection{Submatrices of $B$ of single-edge protographs}\label{A}
When checking 6-cycles by Equation (\ref{Equation}), if two equations with zero in their  right-hand side (equivalent to two 6-cycles) have one equal term like $b^{r^{}_i}_{m_{i}n_{i}} - b^{r^{\prime}_i}_{m_{i}n_{i+1}}$, then these 6-cycles cause an 8-cycle-wc in the TG.   To avoid the occurrence of two 6-cycles with one common c-node regarding SE-Ptgs (where $|W_{ij}|\leq1$ and $B_{ij}=(b_{ij})$ for $|W_{ij}|=1$) all $3\times3$ and $3\times4$ submatrices of $B$ are checked.

First, we present all six equations regarding Equation (\ref{Equation}) to check 6-cycles in a $3\times3$ submatrix of $B$: 
\begin{enumerate}
	\item $(b_{i_1j_1}-b_{i_1j_2})+(b_{i_2j_2}-b_{i_2j_3})+(b_{i_3j_3}-b_{i_3j_1})=e_1$,
	\item $(b_{i_1j_1}-b_{i_1j_3})+(b_{i_2j_3}-b_{i_2j_2})+(b_{i_3j_2}-b_{i_3j_1})=e_2$,
	\item $(b_{i_1j_2}-b_{i_1j_3})+(b_{i_2j_3}-b_{i_2j_1})+(b_{i_3j_1}-b_{i_3j_2})=e_3$,
	\item $(b_{i_1j_3}-b_{i_1j_2})+(b_{i_2j_2}-b_{i_2j_1})+(b_{i_3j_1}-b_{i_3j_3})=e_4$,
	\item $(b_{i_1j_3}-b_{i_1j_1})+(b_{i_2j_1}-b_{i_2j_2})+(b_{i_3j_2}-b_{i_3j_3})=e_5$,
	\item $(b_{i_1j_2}-b_{i_1j_1})+(b_{i_2j_1}-b_{i_2j_3})+(b_{i_3j_3}-b_{i_3j_2})=e_6$.
\end{enumerate} 

Then, for each $3\times3$ submatrix of $B$, we consider a 6-entry 
vector $e=[e_1,e_2,\dots,e_6]$. 
\begin{theorem}\label{3by3}
A necessary and sufficient condition to avoid an 8-cycle-wc in each 
$3\times3$ submatrix of $B$ is that its equivalent 6-entry vector 
$e$ does not satisfy the following equalities:
$
\begin{array}{lll}
 (1)\ e_1=e_6=0,& (2)\ e_1=e_2=0,& (3)\ e_1=e_4=0,\\
 (4)\ e_2=e_5=0,& (5)\ e_2=e_3=0,& (6)\ e_3=e_4=0,\\
 (7)\ e_3=e_6=0,& (8)\ e_4=e_5=0,& (9)\ e_5=e_6=0.\\
 \end{array}
$
\end{theorem}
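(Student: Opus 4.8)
The plan is to convert the geometric description of an 8-cycle-wc given at the opening of Section \ref{IV} into a combinatorial condition on $e_1,\dots,e_6$, and then to run a finite check over all pairs of these equations. Recall from Theorem \ref{chordlesscycles} and that discussion that the VN graph of an 8-cycle-wc is two triangles sharing a single edge; each triangle is a 6-cycle, and the shared edge is one degree-2 c-node lying on both 6-cycles. Through Equation (\ref{Equation}), a 6-cycle in the chosen $3\times3$ submatrix exists iff one of $e_1,\dots,e_6$ is $\equiv0\pmod N$, and every term $b_{ij}-b_{ij'}$ of such an equation encodes a c-node $c_i$ joining the v-nodes $v_j,v_{j'}$ (the subpath $v_jc_iv_{j'}$). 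Consequently two of these 6-cycles pass through a common c-node exactly when the two corresponding equations contain the same term $b_{ij}-b_{ij'}$ up to sign; the sign is irrelevant, since either cycle may be traversed in either direction.

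Before enumerating, I would settle the one non-formal point: that a shared term really produces an 8-cycle-wc rather than a degenerate figure. If two vanishing equations share the term carried by some row on two of the columns, the associated 6-cycles agree on that c-node and its two incident v-nodes but each has a third v-node in the remaining column group; a short computation with Equation (\ref{Equation}) shows these two third v-nodes coincide only when the rows and columns meeting at the shared term support a 4-cycle. Since the TG is 4-cycle-free, the third v-nodes differ, the two 6-cycles are distinct, and their union is a genuine 8-cycle-wc; conversely, by the characterization recalled above, every 8-cycle-wc of the submatrix splits into exactly such a pair of term-sharing 6-cycles. Therefore the submatrix contains an 8-cycle-wc iff some pair $e_i,e_j$ sharing a term satisfies $e_i=e_j=0$.

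It remains to determine which pairs share a term, which I would do by listing each equation's three terms as a pair (row, unordered column-pair) and scanning all $\binom{6}{2}=15$ pairs of equations. The outcome is that the ``shares a term'' relation is the complete bipartite relation between $\{e_1,e_3,e_5\}$ and $\{e_2,e_4,e_6\}$: each odd/even pair shares exactly one term, while no odd/odd or even/even pair shares any, giving precisely the nine pairs (1)--(9). Hence avoiding 8-cycles-wc in the submatrix is equivalent to forbidding $e_i=e_j=0$ for each of these nine pairs, which is the statement. The work is mostly this finite scan; its only delicacy is the sign bookkeeping—identifying $b_{ij}-b_{ij'}$ with $b_{ij'}-b_{ij}$ so that real shared c-nodes are caught and no spurious coincidences are manufactured—while the single conceptual step is the 4-cycle-free non-degeneracy used in the previous paragraph.
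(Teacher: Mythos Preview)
Your argument is correct and follows the same approach as the paper: identify an 8-cycle-wc in the $3\times3$ block with a pair of 6-cycle equations sharing a term $\pm(b_{i j}-b_{i j'})$, then check which of the $\binom{6}{2}$ pairs share such a term. Your write-up is in fact more complete than the paper's brief proof---the paper only works out $e_1=e_2=0$ and says ``similarly'', whereas you carry out the full enumeration (with the tidy observation that the sharing relation is complete bipartite between $\{e_1,e_3,e_5\}$ and $\{e_2,e_4,e_6\}$), argue the converse direction explicitly, and handle the non-degeneracy of the two 6-cycles via the 4-cycle-free hypothesis.
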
 

\begin{proof}
To avoid an 8-cycle-wc we only check those 6-entry vectors which contain at least two zeros. Now, if $e_1=e_2=0$, then, since the first two equations contain a term $(b_{i_2j_2}-b_{i_2j_3})$ or $-(b_{i_2j_2}-b_{i_2j_3})$, these two 6-cycles belong to an 8-cycle-wc. Similarly, if one of equalities $(1)$ to $(9)$ occurs, then the TG contains an 8-cycle-wc.\hfill
\end{proof}
\begin{definition}\label{Compact}
 Let $\vec{0}$ be an all-zero column vector of size $\gamma$, $\vec{B_1}=[0,1,b_{21},\ldots,b_{(\gamma-1)1}]$ be a $\gamma$-entry column vector, where $b_{i1}\in\{2,\ldots,N-1\}$, $\gamma_j\otimes \vec{B_1}$ be the $j$-th column vector, where $\gamma_j\in\{2,\ldots,N-1\},\ \gamma_j<\gamma_{j+1}$ and $\otimes$ denote the multiplication modulo $N$. These $n$ column vectors form the following $\gamma\times n$ exponent matrix of a compact QC-LDPC code  
$
B=[\vec{0}\vert \vec{B_1}\vert \gamma_2\otimes\vec{B_1}\vert\cdots\vert\gamma_{n-1}\otimes\vec{B_1}]	$ with lifting degree $N$.
\end{definition}

  Applying Theorem \ref{3by3} to an exponent matrix of a compact QC-LDPC code yields the following result.
\begin{Corollary}\label{Compact3by3}
  Let an exponent matrix $B$ of a compact QC-LDPC with a vector of coefficients $[0,1,\gamma_2,\ldots,\gamma_{n-1}]$ and the vector $\vec{B}_1$ be given. If for each three elements $p,q,r\in\vec{B_1}$ and $0\leq i,j\leq n-1$ three phrases $(p+q-2r)(\gamma_i-\gamma_j),\ (p+r-2q)(\gamma_i-\gamma_j)$ and $(q+r-2p)(\gamma_i-\gamma_j)$ are nonzero modulo $N$, then no $3\times3$ submatrix of $B$ causes an 8-cycle-wc.
\end{Corollary}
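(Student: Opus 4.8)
The plan is to reduce the claim to Theorem~\ref{3by3}, which characterizes an 8-cycle-wc inside a $3\times3$ submatrix as the simultaneous vanishing of two of the six quantities $e_1,\dots,e_6$, the bad coincidences being exactly the nine pairs listed there. Hence it suffices to show that the stated non-vanishing (modulo $N$) of the three products $(p+q-2r)(\gamma_i-\gamma_j)$, $(p+r-2q)(\gamma_i-\gamma_j)$, $(q+r-2p)(\gamma_i-\gamma_j)$ makes each of those nine pairs impossible.

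First I would exploit the multiplicative structure of the compact exponent matrix. Since the $j$-th column equals $\gamma_j\otimes\vec{B_1}$, any two entries in the same row $i$ satisfy $b_{ij}-b_{ij'}\equiv b_{i1}(\gamma_j-\gamma_{j'})\pmod N$. Every term in the six equations defining $e_1,\dots,e_6$ is such a same-row difference, so substituting this identity rewrites each $e_k$ as a bilinear form in the three row values $p=b_{i_11}$, $q=b_{i_21}$, $r=b_{i_31}$ and the three column coefficients $\alpha=\gamma_{j_1}$, $\beta=\gamma_{j_2}$, $\delta=\gamma_{j_3}$. Setting $x=p-q$, $y=q-r$, $u=\alpha-\beta$, $v=\beta-\delta$ (so $\alpha-\delta=u+v$), one obtains a symmetric table, namely $e_1=xu+y(u+v)$, $e_2=x(u+v)+yu$, $e_3=xv-yu$, $e_4=-xv-y(u+v)$, $e_5=-x(u+v)-yv$, $e_6=-xu+yv$.

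The core step is then elementary, pair by pair. For each forbidden pair $\{e_k,e_l\}$ I would form the single linear combination $e_k\pm e_l$ whose cross terms cancel, leaving one product of a row-combination from $\{p+q-2r,\,p+r-2q,\,q+r-2p\}$ with a column-difference from $\{\gamma_{j_1}-\gamma_{j_2},\,\gamma_{j_2}-\gamma_{j_3},\,\gamma_{j_1}-\gamma_{j_3}\}$. For example $e_1-e_4=(u+v)(x+2y)=(\gamma_{j_1}-\gamma_{j_3})(p+q-2r)$, $e_1-e_2=-v(x-y)=-(\gamma_{j_2}-\gamma_{j_3})(p+r-2q)$, and $e_1-e_6=u(2x+y)=-(\gamma_{j_1}-\gamma_{j_2})(q+r-2p)$; the remaining six pairs reduce identically, so the three row-combinations run against the three column-differences and cover all nine pairs. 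If $e_k=e_l=0$ then the chosen combination is $0$, forcing the corresponding product to vanish $\bmod N$; contrapositively, the hypothesis forbids this, so no pair can coincide and no $3\times3$ submatrix carries an 8-cycle-wc. Because the hypothesis quantifies over all triples $p,q,r\in\vec{B_1}$ and all column indices $i,j$, it covers every $3\times3$ submatrix at once.

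The main obstacle is bookkeeping rather than depth: for each of the nine pairs one must select the correct sign, $e_k+e_l$ versus $e_k-e_l$, so that the surviving factor is a row-combination times a column-difference rather than the useless product of a row-difference by a symmetric column-combination; the wrong choice keeps the cross terms and does not match any phrase in the statement. Once the right combinations are fixed, matching each resulting product to one of the three phrases is routine, and every identity is read modulo $N$ throughout, using only that a nonzero product cannot equal $0$.
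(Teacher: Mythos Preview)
Your proposal is correct and follows exactly the route the paper indicates: the corollary is stated as the direct specialization of Theorem~\ref{3by3} to the compact structure $b_{ij}\equiv\gamma_j b_{i1}$, and the paper gives no further details. Your pairwise combinations $e_k\pm e_l$ precisely supply the omitted computation showing that each of the nine forbidden coincidences in Theorem~\ref{3by3} forces one of the three products $(p+q-2r)(\gamma_i-\gamma_j)$, $(p+r-2q)(\gamma_i-\gamma_j)$, $(q+r-2p)(\gamma_i-\gamma_j)$ to vanish modulo $N$.
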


In order to check 6-cycles in a $3\times4$ submatrix of $B$, for every zero element of a 6-entry vector we allocate a 
$\emph{column-index vector}$ containing three pairs of column indices 
of $B$ which appear in Equation (\ref{Equation}). For a $c\times d$ 
exponent matrix, the column-index vector has $c$ elements. Since for 
each vector there are three pairs of column indices, we take the empty 
set for the positions corresponding to the row indices of $B$ which 
are not among the three rows appearing in Equation (\ref{Equation}). 
For example, if Equation (\ref{Equation}) is checked for the first 
three rows of a $4\times d$ exponent matrix and three columns with 
indices $j_1,j_2,j_3$, then we assign $e_1=0$ to $[(j_1,j_2),(j_2,j_3),
(j_1,j_3),\emptyset]$.  
\begin{theorem}\label{3by4}
A necessary and sufficient condition to remove an 8-cycle-wc from 
each $3\times4$ submatrix of $B$ is that no two 6-entry vectors of 
it yield two column-index vectors containing a common pair in the 
same position. 
\end{theorem}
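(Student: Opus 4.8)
The plan is to reduce the statement to the characterization, established at the start of this section, that an 8-cycle-wc occurs precisely when two 6-cycles meet in a single common c-node: the VN graph of an 8-cycle-wc is two triangles sharing one edge (Fig.~\ref{FigCW3}(b)), each triangle being a 6-cycle and the shared edge being the common c-node. So I would prove the equivalent claim that two 6-cycles of the $3\times4$ submatrix share a c-node if and only if their column-index vectors contain a common pair in the same position. This recasts, into a form that can compare 6-cycles living in \emph{different} column triples, the same principle already used for a single $3\times3$ submatrix in Theorem~\ref{3by3}.

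First I would spell out how a column-index vector records a 6-cycle. A zero entry of a 6-entry vector is an instance of Equation~(\ref{Equation}) with $k=3$; each of its three summands $(b_{m_i n_i}-b_{m_i n_{i+1}})$ is contributed by one c-node of the cycle, lying in row $m_i$ and joining v-node columns $n_i,n_{i+1}$. Recording, for every row, the column pair it contributes (and $\emptyset$ for untraversed rows) is exactly the column-index vector. Crucially, in the lifted TG the summand $(b_{mn}-b_{mn'})$ fixes the difference of the positions of the two v-nodes on that c-node, so the \emph{type} of a c-node --- its row together with its unordered column pair --- is precisely the datum stored in one position of the column-index vector.

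For necessity I would take an 8-cycle-wc, split it into its two constituent 6-cycles $C_1,C_2$, and locate their unique shared c-node $c$; as $c$ sits in a single row $m$ and joins columns $n_a,n_b$, both column-index vectors carry the pair $(n_a,n_b)$ in position $m$. For sufficiency I would run the argument backwards: a common pair $(n_a,n_b)$ in position $m$ means $C_1$ and $C_2$ each traverse a row-$m$ c-node on columns $n_a,n_b$ with the same difference $b_{mn_a}-b_{mn_b}$ between the positions of its two v-nodes. Invoking the cyclic-shift invariance of the QC construction, I would shift every node of $C_2$ by a constant so that its row-$m$ c-node coincides with that of $C_1$; since a uniform shift preserves all position-differences, after this alignment $C_1$ and $C_2$ literally share the lifted c-node $c$ together with its two v-nodes $v_{n_a},v_{n_b}$. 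Because the two 6-cycles come from different 6-entry vectors they differ in their third column, so their remaining v-nodes are distinct and, since in a triangle each c-node carries a distinct column pair, no further c-node is shared; the union is therefore two triangles meeting only in the edge $v_{n_a}v_{n_b}$, an 8-cycle-wc.

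The main obstacle is this sufficiency alignment: a matching pair in the column-index vectors only certifies equal c-node \emph{types}, not one and the same lifted c-node, so the whole argument hinges on using the quasi-cyclic symmetry to superimpose $C_2$ onto $C_1$ along that c-node. The second delicate point is ruling out degeneracy --- verifying that after alignment the two 6-cycles share exactly that one c-node so that a genuine, nondegenerate 8-cycle-wc (not a shorter cycle) results; the distinctness of the third columns, which is exactly what the passage from $3\times3$ to $3\times4$ provides, is what secures this.
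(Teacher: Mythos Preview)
Your proposal is correct and follows essentially the same approach as the paper: both arguments identify an 8-cycle-wc with two 6-cycles sharing a c-node, and then observe that a shared c-node corresponds precisely to a common column pair in the same row position of the two column-index vectors. Your version is considerably more detailed than the paper's sketch --- in particular, your explicit use of cyclic-shift invariance to align the two lifted 6-cycles on the shared c-node, and your non-degeneracy check using the distinct third columns, fill in steps the paper leaves implicit --- but the underlying route is the same.
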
 
\begin{proof}
A $3\times4$ submatrix of $B$ contains four $3\times3$ submatrices. Among  the $3\times4$ submatrices we consider those containing at least two $3\times3$ submatrices whose 6-entry vectors have a zero element. Suppose $U$ is a $3\times4$ submatrix of $B$, and $V$ and $W$ are two $3\times3$ submatrices of $U$ with the above property.  If the column-index vector assigned to one zero from the 6-entry vector of $V$ and the column-index vector allocated to one zero from the 6-entry vector of $W$ have a common pair in the same position, their corresponding 6-cycles belong to an 8-cycle-wc.\hfill
\end{proof}

Avoiding 8-cycles-wc from occurrence in the TG of a girth-6 LDPC code causes the removal of small size ETSs as well as increasing $d_{\min}$. For example, the elimination of 8-cycles-wc results in the removal of $(a,b)$ ETSs with $a\leq5$, $b\leq2$ and $d_{\min}\geq6$ for $\gamma=3$ and the removal of $(a,b)$ ETSs with $a\leq8$, $b\leq2$, $(6,4)$, $(7,4)$ ETSs and $d_{\min}\geq8$ for $\gamma=4$. Thus, a necessary and sufficient condition to construct a $(\gamma,n)$-regular QC-LDPC code with $g=6$ and $d_{\min}\geq2\gamma$  whose TG is free of small size $(a,b)$ ETSs, $b<a$, is to apply Theorems \ref{3by3} and \ref{3by4} to the exponent matrix. 


Applying the necessary and sufficient conditions proposed in Theorems \ref{3by3} and \ref{3by4} to a $3\times5$ exponent matrix gives a $(3,5)$-regular QC-LDPC code with  $N=10,g=6$, and free of $(a,b)$ ETSs, $a\leq5$, $b\leq2$. In this case, if we consider the compact method in Definition \ref{Compact}, we obtain $N=11$. Since the compact method has less complexity and accelerates the process of finding an exponent matrix within a certain size, we prefer to use this method. In Tables \ref{TabC3}, \ref{TabC4}, we provide an exponent matrix  $B$ for a compact QC-LDPC code with $g=6$, different $\gamma$s and $n$s. We take the lifting degree achieved for each $n$ as an upper bound on $N$ of a  $(\gamma,n)$-regular QC-LDPC code with $g=6$ and free of 8-cycles-wc. In each case we only present the second row of  $B$ as well as $\vec{B_1}$.   Using the compact technique, Corollary \ref{Compact3by3}, Theorem \ref{3by4}, it takes the search algorithm less than 10 seconds to find most of $B$s.

As can be seen in Table \ref{TabC3}, for $\gamma=3$ and $n=5,7,8,9$, removing 8-cycles-wc results in girth-6 codes  whose $d_{\min}$s are equal to the maximum $d_{\min}$s reported in   \cite{Voltage} for girth-8 QC-LDPC codes with the same degree distribution. Moreover,  for $\gamma=4,n=6,7,$ $d_{\min}$ of girth-6 codes free of small size ETSs  are larger than those with $g=8$.     Another merit of removing 8-cycles-wc, and consequently the elimination of small size  ETSs, is shown in the performance curves of two $(3,6)$-regular QC-LDPC codes $C_1$, $C_2$ with $d_{\min}=8,N=13$ and two $(3,11)$-regular QC-LDPC codes $C_3$, $C_4$ with $d_{\min}=8,N=31$, such that, $C_1,C_3$ are free of the small ETSs and $C_2,C_4$ contain those ETSs.  As we expect $C_1$, $C_3$ outperform their counterparts $C_2$, $C_4$, respectively; see Fig. \ref{Fig2}. The codes simulated are so short that they do not show an error floor. Instead, because of the removal of 8-cycles-wc, we see a small improvement in waterfall. Fig. \ref{Fig2} also shows the performance curve of a $(4,12)$-regular QC-LDPC code, $C_5$,  free of 8-cycles-wc and $N=52$. Performances of these codes were decoded using the sum-product algorithm with 50 iterations. 
\begin{table}[t]
 	\setlength{\tabcolsep}{.7 pt}
 	\centering
 	\caption{  An exponent matrix $B$  of a $(\gamma,n)$-regular compact QC-LDPC code free of 8-cycles-wc, with $g=6$ and minimum  $N$ whose $d_{\min}$ is compared with the  highest $d_{\min}$ of the code with $g=8$ in \cite{Voltage}.}


\begingroup\fontsize{8.5pt}{11pt}
\begin{tabular}{|l|l|l|c|c|c|l|}
 		\hline

 			$\gamma,n$  &  Second row of $B$ &  $\vec{B_1}$ & $N$ & $d_{\min}$&  Runtime  & $N,d_{\min},g=8$  \\
 		\hline

 		3,5  & $[0, 1, 2, 4, 7]$ & $[0,1,3]$& 11 & 10 & 0.03 & 13, 10 \\
 		\hline

 		3,6 	& $[ 0, 1, 2, 3, 5, 8 ]$ & $[0,1,4]$& 13 & 8 & 0.141 & 18, 10\\
 		\hline

 		 		3,7 	& $[ 0, 1, 2, 4, 7, 15, 16 ]$ & $[0,1,8]$& 19 & 10 & 5.312 & 21, 10\\
 		\hline

 		 3,8	& $[ 0, 1, 2, 3, 5, 7, 12, 13 ]$ & $[0,1,4]$& 19 & 8 & 0.766 & 25, 8\\
 		\hline

 		 3,9	& $[ 0, 1, 2, 3, 5, 7, 12, 13, 16 ]$ & $[0,1,4]$& 19 & 8 & 0.75 & 30, 8\\
 		\hline
 	
        4,5  & $[0, 1, 2, 4, 7]$ & $[0,1,3,9]$ & 13 & 14 & 0.5 & 23, 24 \\
		\hline

		4,6 	& $[ 0, 1, 2, 3, 6, 11 ]$ &$[0,1,4,5]$ & 17 & 18 & 3.437& 24, 8\\
		\hline

		4,7 	& $[ 0, 1, 2, 3, 5, 7, 14 ]$ &$[0,1,4,5]$ & 19 & 16 &7.625& 30, 12\\
		\hline

		4,8 	& $[ 0, 1, 2, 3, 5, 7, 12, 13 ]$ &$[0,1,4,5]$ & 19 & 10 &7.375& 39, 12\\
		\hline

		4,9 	& $[ 0, 1, 2, 3, 5, 7, 12, 13, 16 ]$ &$[0,1,4,5]$ & 19 & 10 &7.156& 48, 12\\
		\hline
 	\end{tabular}
 	\endgroup
 	\label{TabC3}
 \end{table} 
 \vspace{-0.15cm}
 
 \begin{table}[t]
 	\setlength{\tabcolsep}{.7 pt}
 	\centering
 	\caption{    An exponent matrix $B$  of a $(\gamma,n)$-regular compact QC-LDPC code free of 8-cycles-wc, with $g=6$, $n\geq10$ and minimum  $N$ which is compared with the minimum $N$ of a QC-LDPC code with $g=8$.}


\begingroup\fontsize{8.5pt}{11pt}
\begin{tabular}{|l|l|l|l|l|}
		 \hline
		 $\gamma,n$  &  Second row of $B$ & $\vec{B_1}$ &  $N$ &$N$, $g=8$ \\
		 
		 \hline
		 3,10	& $[0, 1, 2, 3, 4, 6, 8, 11, 12, 16]$& $[0,1,5]$& 21  &35\\
 		\hline

 		3,11	& $[ 0, 1, 2, 3, 4, 5, 7, 9, 12, 13, 17 ]$& $[0,1,6]$& 31  & 41\\
 		\hline
		 
		 3,12 & $[0, 1, 2, 4, 9, 10, 12, 15, 19, 21, 31, 32]$& $[0,1,3]$ & 35  & 45\\
		 \hline
		 
		 3,13 & $\left[\begin{array}{l}
		 0, 1, 2, 4, 7, 8, 9, 11, 14, 22, 25, 31, 40
		  \end{array}\right]$& $[0,1,3]$ & 43  & 50\\
		 \hline
		 
		 3,14 & $\left[\begin{array}{l}
		 0, 1, 2, 4, 7, 8, 9, 11, 14, 15, 26, 31, 35, 36
		  \end{array}\right]$& $[0,1,3]$ & 47  & 57\\
		 \hline
		 
		 3,15 & $\left[\begin{array}{l}
		 0, 1, 2, 4, 7, 8, 9, 11, 14, 16, 18, 22,35, 39, 41
		  \end{array}\right]$& $[0,1,3]$ & 53  & 63\\
		 \hline
		 
		 3,16 & $\left[\begin{array}{l}
		 0, 1, 2, 4, 7, 8, 9, 11, 14, 15, 16, 18,39, 41, 43, 47
		  \end{array}\right]$& $[0,1,3]$ & 59  & 71\\
		 \hline
		 
		 3,17 & $\left[\begin{array}{l}
		 0, 1, 2, 4, 7, 8, 9, 11, 14, 15, 16, 18,35, 38, 41, 43, 47
		  \end{array}\right]$& $[0,1,3]$ & 61  & 79\\
		 \hline
		 
		 3,18 & $\left[\begin{array}{l}
		 0, 1, 2, 4, 7, 8, 9, 11, 15, 18, 20, 23, 32, 41, 47, 49, 51, 62
		  \end{array}\right]$& $[0,1,3]$ & 67  & 88\\
		 \hline
		 
		  4,10 & $[0, 1, 2, 4, 7, 8, 9, 13, 17, 23]$& $[0,1,3,4]$ & 37  & 57\\
		 \hline
		 
 4,11 & $[0, 1, 2, 4, 7, 8, 9, 13, 22, 25, 28]$& $[0,1,3,4]$ & 47  & 67\\
		 \hline
		 
		 4,12 & $[0, 1, 2, 15, 16, 17, 21, 24, 25, 33, 36, 48]$& $[0,1,3,50]$ & 52  & 80\\
		 \hline
 	\end{tabular}
 	\endgroup
 	\label{TabC4}
 \end{table}

\subsection{  Submatrices of $B$ of multiple-edge protographs}\label{B} 
  According to \cite{Sidon}, to  consider  6-cycles in the exponent matrix of an ME-Ptg, Equation (\ref{Equation})  has to be checked for all submatrices of sizes $1\times1,\ 1\times 2,\ 2\times1, 1\times3,\ 3\times1,\ 2\times2,\ 2\times3,\ 3\times2$ and $3\times3$. In this case, if the TG  is free of 8-cycles-wc, then there are no pairs like $\left(b^{r^{}_i}_{m_{i}n_{i}}, b^{r^{\prime}_i}_{m_{i}n_{i+1}}\right)$ which occur in two 6-cycle equations. 
 \begin{Example}
  The base matrix of a Ptg-based Raptor-like (PBRL) code with an ME-Ptg consists of two submatrices $W_{HRC}$ and $W_{IRC}$ which yield  $\small{W=\left[\begin{array}{cc}
				W_{HRC} & 0\\
				W_{IRC} & I
				\end{array}\right]}$, where $0$ and $I$ refer to all-zeros and identity matrix of appropriate size. Applying the above mentioned restriction to a PBRL-LDPC code in \cite{Divsalar} with $\tiny{W_{HRC}=\left[\begin{array}{cccccc}
				1& 1& 2& 1& 2& 1\\
				2& 2& 1& 2& 1& 2\\
				\end{array}\right]}$ and $\tiny{W_{IRC}=\left[\begin{array}{cccccc}
				1& 1& 1& 1& 1& 1\\
				1& 1& 1& 0& 1& 0\\
				0& 1& 0& 0& 1& 1\\
				1& 0& 0& 1& 0& 1\\
				0& 0& 1& 0& 1& 0\\
				0& 1& 0& 1& 0& 1\\
				1& 0& 1& 0& 1& 0\\
				\end{array}\right]}$
			results in an exponent matrix with $N=78$ and the following $\small{\left[\begin{array}{c}
				B_{HRC}\\
				B_{IRC}\\
				\end{array}\right]}$. The TG of this PBRL-LDPC code, $C_6$, has girth 6 and is free of  8-cycles-wc whose performance curve is shown in Fig. \ref{Fig2}: 			
$$
  \small{\left[\begin{array}{cccccc}
(0)&(0)&(0,27) &(0)&(0,37)&(0)\\
(0,4)&(21,61)&(3) &(29,52)&(76)&(18,46)\\
(3)&(44)&(11) &(31)&(73)&(50)\\
(36)&(60)&(54) &(\infty)&(18)&(\infty)\\
(\infty) &(72)&(\infty) &(\infty) &(6)&(18)\\
63& (\infty)  & (\infty) &1 & (\infty)  & 62\\
(\infty)  & (\infty)  &33 & (\infty) &63 & (\infty) \\
(\infty)  &36 & (\infty)  &75 & (\infty)  &48 \\
36 & (\infty)  &28 & (\infty)  &44&(\infty)  \\
\end{array}\right].}
$$
\end{Example}
\begin{Example}
\textcolor{blue}A sequence $S=\{s_0,s_1,\ldots,s_l\}$ over $\mathbb{Z}_m$ with $l+1\choose2$ distinct sums $(s_i+s_j) \mod m\ (i\leq j)$ is a Sidon sequence \cite{Sidon}. For example, $S_1=\{0,6,9,10,21,23,28\}$ over $\mathbb{Z}_{48}$ is a Sidon sequence. We consider an exponent matrix whose integer entries are chosen from  $S$. For instance, $E=[E^{(0)},E^{(1)},E^{(2)}]$, where $\small{E^{(k)}=\left[\begin{array}{ccc}
(a^{(k)}_0,b^{(k)}_0) & (\infty) & (0)\\
(0) & (a^{(k)}_1,b^{(k)}_1) & (\infty)\\
(\infty) & (0) & (a^{(k)}_2,b^{(k)}_2) \\
\end{array} \right]}$ and all $a^{(k)}_i$s and $b^{(k)}_i$s are in $S$,  results in a $(3,9)$-regular ME-QC-LDPC code from Sidon sequences with $N=m$ \cite{Sidon}.  From $S_1$ over $\mathbb{Z}_{48}$ and avoiding 4-cycles and 8-cycles-wc, the pairs in the diagonal of this $E^{(k)}$ are $(0,6),(0,9),(0,10)$ for $k=0$, $(9,10),(6,23),(6,21)$ for $k=1$ and $(23,28),(21,28),(9,23)$ for $k=2$.
\end{Example} 
\section{Conclusion}\label{V}
We define a cycle with a chord in the TG of an LDPC code and propose the impact of removing short cycles with a chord on the lower bound of the size of small TSs and $d_{\min}$ of a code. We prove that without increasing the girth we can improve the important factors of an LDPC code which significantly influence the performance curve of a code.   For any protograph, single-edge or multiple-edge, regular or irregular, we provide  conditions to remove these cycles from the TG of a girth-6  QC-LDPC code.


\begin{center}
	\begin{figure}
		\centering

		\includegraphics[scale=.7]{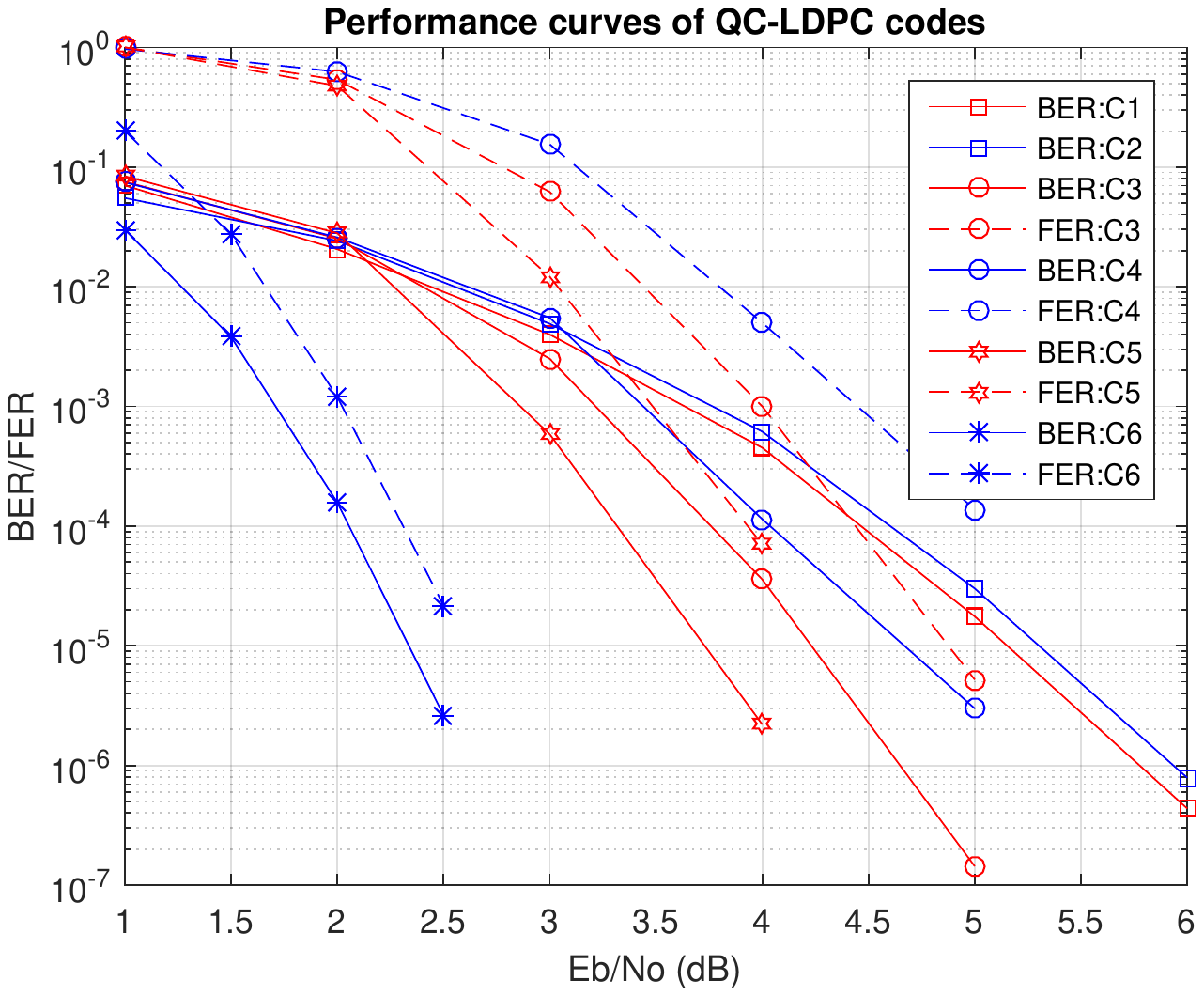}

		\caption{Bit and/or frame error rates of QC-LDPC codes, $C_1,\cdots,C_6$.} \label{Fig2}
	\end{figure}
\end{center} 
\end{document}